\title{Generalized Complexity of $\mathcal{ALC}$ Subsumption}
\author{Arne Meier\\Institut f\"ur Theoretische Informatik, Leibniz Universit\"at Hannover\\meier@thi.uni-hannover.de}
\date{}
\begin{document}
\maketitle
\thispagestyle{empty}
\begin{abstract}
The subsumption problem with respect to terminologies in the description logic \ALC is $\EXPTIME$-complete. We investigate the computational complexity of fragments of this problem by means of allowed Boolean operators. Hereto we make use of the notion of clones in the context of Post's lattice. Furthermore we consider all four possible quantifier combinations for each fragment parameterized by a clone. We will see that depending on what quantifiers are available the classification will be either tripartite or a quartering.
\end{abstract}

\section{Introduction}
Description logics (DL) play an important role in several areas of research, e.g., semantic web, database structuring, or medical ontologies \cite{DLHB,brle84,brsc85,owl2}. As a consequence there exists a vast range of different extensions where each of them is highly specialized to its field of application. 
Nardi and Brachman describe \emph{subsumption} as the important inference problem within DL \cite{nabra03}. Given two (w.l.o.g.\ atomic) concepts $C,D$ and a set of axioms (which are pairs of concept expressions $A,B$ stating $A$ implies $B$), one asks the question whether $C$ implies $D$ is consistent with respect to each model satisfying all of the given axioms. Although the computational complexity of subsumption in general can be  between tractable and $\EXPTIME$ \cite{dlnhnm92,dlnn97} depending on which feature\footnote{These features, for instance, can be existential or universal restrictions, availability of disjunction, conjunction, or negation. Furthermore, one may extend a description logic with more powerful concepts, e.g., number restrictions, role chains, or epistemic operators \cite{DLHB}.} is available, there exist nonetheless many DL which provide a tractable (i.e., in $\P$) subsumption reasoning problem, e.g., the DL-Lite and $\mathcal{EL}$ families \cite{ackz07,ackz09,bbl05,bbl08,cgllr05}. One very prominent application example of the subsumption problem is the SNOMED CT clinical database which includes about $400.000$ axioms and is a subset of the DL $\mathcal{EL}^{++}$ \cite{19007439,19007443}.

In this paper we investigate the subsumption problem with respect to the most general (in sense of available Boolean operators)  description logic $\ALC$. It is known that the unrestricted version of this problem is $\EXPTIME$-complete due to reducibility to a specific DL satisfiability problem \cite{DLHB}, and is therefore highly intractable. Our aim is to understand where this intractability comes from or to which Boolean operator it may be connected to. Therefore we will make use of the well understood and much used algebraic tool, Post's lattice \cite{pos41}. At this approach one describes sets formulae (w.r.t.\ to some \emph{clone}) which can be constructed from a given finite set of Boolean functions by arbitrary composition of these functions, or projections. For a good introduction into this area consider \cite{Bohler:2003tg}. The main technique is to investigate fragments of a specific decision problem by means of allowed Boolean functions; in this paper this will be the subsumption problem. As Post's lattice considers any possible set of all Boolean functions a classification by it always yields an exhaustive study. This kind of research  has been done previously for several different kind of logics, e.g., temporal, hybrid, modal, and nonmonotonic logics \cite{bemethvo09,cmtv10,crscthwo10,hescsc08,sc07,sc08}.

\paragraph{Main results.} The most general class of fragments, i.e., those which have both quantifiers available perfectly show how powerful the subsumption problem is. Having access to at least one constant (true or false) leads to an intractable fragment. Merely for the fragment where only projections (and none of the constants) are present it is not clear if there can be a polynomial time algorithm for this case and has been left open. If one considers the cases where only one quantifier is present, then the fragments around disjunction (case $\forall$), respectively, the ones around conjunction (case $\exists$) become tractable. Without quantifiers conjunctive and disjunctive fragments are $\P$-complete whereas the fragments which include either the affine functions (exclusive or), or can express $x\lor(y\land z)$, or $x\land(y\lor z)$, or self-dual functions (i.e., $f(x_1,\dots,x_n)=\lnot f(\lnot x_1,\dots,\lnot x_n)$) are intractable. 
\Cref{fig:SUBSexall_ex_all} depicts how the results directly arrange within Post's lattice. 


\section{Preliminaries}
In this paper we will make use of standard notions of complexity theory \cite{pip97b}. In particular, we work with the classes $\NL, \P, \co\NP, \EXPTIME$, and the class $\oplus\LOGSPACE$ which corresponds to those nondeterministic Turing machine running in logarithmic space whose computations trees have an odd number of accepting paths. Usually all stated reductions are logarithmic space many-one reductions $\leqlogm$. We write $A\equivlogm B$ iff $A\leqlogm B$ and $B\leqlogm A$ hold.

\paragraph{Post's Lattice.} Let $\true$, $\false$ denote the truth values true, false. Given a finite set of Boolean functions $B$, we say the \emph{clone of $B$} contains all compositions of functions in $B$ plus all projections; the smallest such clone is denoted with $[B]$ and the set $B$ is called a \emph{base of $[B]$}. The lattice of all clones has been established in \cite{pos41} and a much more succinct introduction can be found in \cite{Bohler:2003tg}. \Cref{tab:base} depicts all clones and their bases which are relevant for this paper. Here $\mathrm{maj}$ denotes the majority, and $\mathrm{id}$ denotes identity. Let $f\colon\{\true,\false\}^n\to\{\true,\false\}$ be a Boolean function. Then the \emph{dual of $f$}, in symbols $\dual f$, is the $n$-ary function $g$ with $g(x_1,\dots,x_n)=\overline{f(\overline{x_1},\dots,\overline{x_n})}$. Similarly, if $B$ is a set of Boolean functions, then $\dual B:=\{\dual f\mid f\in B\}$. Further, abusing notation, define $\dual\exists:=\forall$ and $\dual\forall=\exists$; if $\calQ\subseteq\{\exists,\forall\}$ then $\dual\calQ:=\{\dual\Game\mid\Game\in\calQ\}$.

\begin{table}
$$
\begin{array}{cc|cc|cc}
 \text{Clone} & \text{Base} &  \text{Clone} & \text{Base} & \text{Clone} & \text{Base} \\\hline
 \CloneBF & \{x\land y,\overline x\} &
 \CloneS_{00} & \{x\lor(y\land z)\} &
 \CloneS_{10} & \{x\land(y\lor z)\} \\
 \CloneD_1 & \{\mathrm{maj}\{x,y,\overline{z}\}\} &
 \CloneD_2 & \{\mathrm{maj}\{x,y,z\}\} &
 \CloneM_0 & \{x\land y, x\lor y, \false\}\\
 \CloneL & \{x\xor y,\true\} &
 \CloneL_0 & \{x\xor y\} &
 \CloneL_1 & \{x\leftrightarrow y\} \\
 \CloneL_2 & \{x\xor y\xor z\} &
 \CloneL_3 & \{x\xor y\xor z\xor\true\} \\
 \CloneV & \{x\lor y,\true,\false\} &
 \CloneV_0 & \{x\lor y,\false\} &
 \CloneV_2 & \{x\lor y\} \\
 \CloneE & \{x\land y,\true,\false\}&
 \CloneE_0 & \{x\land y,\false\} &
 \CloneE_2 & \{x\land y\} \\
 \CloneN & \{\overline x,\true\} &
 \CloneN_2 & \{\overline x\} &\\
 \CloneI_0 & \{\mathrm{id},\false\}&
 \CloneI_1 & \{\mathrm{id},\true\}&
 \CloneI_2 & \{\mathrm{id}\}
\end{array}
$$
\caption{All clones and bases relevant for this paper.}\label{tab:base}
\end{table}

\paragraph{Description Logic.} We use the standard syntax and semantics of $\mathcal{ALC}$ as in \cite{DLHB}. Additionally we we adjusted them to fit the notion of clones. 
The set of \emph{concept descriptions} (or \emph{concepts}) is defined by
$
C := A\mid \circ_f(C,\dots,C)\mid \exists R.C\mid \forall R.C,
$ 
where $A$ is an atomic concept (variable), $R$ is a role (transition relation), and $\circ_f$ is a Boolean operator which corresponds to a Boolean function $f\colon\{\true,\false\}^n\to\{\true,\false\}$. For a given set $B$ of Boolean operators and $\calQ\subseteq\{\exists,\forall\}$, we define that a \emph{$B$-$\calQ$-concept} uses only operators from $B$ and quantifiers from $\calQ$. Hence, if $B=\{\land,\lor\}$ then $[B]=\CloneBF$, and the set of $B$-concept description is equivalent to (full) $\mathcal{ALC}$. Otherwise if $[B]\subsetneq\CloneBF$ for some set $B$, then we consider real subsets of $\mathcal{ALC}$ and cannot express any (usually in $\mathcal{ALC}$ available) concept. 
An \emph{axiom} is of the form $C\dsub D$, where $C$ and $D$ are concepts; $C\equiv D$ is the syntacic sugar for $C\dsub D$ and $D\dsub C$. A \emph{TBox} is a finite set of axioms and a $B$-$\calQ$-TBox contains only axioms of $B$-$\calQ$-concepts.

An \emph{interpretation} is a pair $\calI=(\Delta^\calI,\cdot^\calI)$, where $\Delta^\calI$ is a nonempty set and $\cdot^\calI$ is a mapping from the set of atomic concepts to the power set of $\Delta^\calI$, and from the set of roles to the power set of $\Delta^\calI\times\Delta^\calI$. We extend this mapping to arbitrary concepts as follows:
\begin{align*}
 (\exists R.C)^\calI &= \big\{x\in\Delta^\calI\,\big|\, \{y\in C^\calI\mid (x,y)\in R^\calI\}\neq\emptyset\big\},\\
 (\forall R.C)^\calI &= \big\{x\in\Delta^\calI\,\big|\, \{y\in C^\calI\mid (x,y)\notin R^\calI\}=\emptyset\big\},\\
 \big(\circ_f(C_1,\dots,C_n)\big)^\calI &= \big\{x\in\Delta^\calI\,\big|\, f(||x\in C_1^\calI||,\dots,||x\in C^\calI_n||)=\true\big\},
\end{align*}
where $||x\in C^\calI||= 1$ if $x\in C^\calI$ and $||x\in C^\calI||=0$ if $x\notin C^\calI$. An interpretation $\calI$ \emph{satisfies} the axiom $C\dsub D$, in symbols $\calI\models C\dsub D$, if $C^\calI\subseteq D^\calI$. Further $\calI$ \emph{satisfies a TBox}, in symbols $\calI\models\calT$, if it satisfies every axiom therein; then $\calI$ is called a \emph{model}. Let $\calQ\subseteq\{\exists,\forall\}$ and $B$ be a finite set of Boolean operators. Then for the \emph{TBox-concept satisfiability problem, $\TCSAT_\calQ(B)$,} given a $B$-$\calQ$-TBox $\calT$ and a $B$-$\calQ$-concept $C$, one asks if there is an $\calI$ s.t.\ $\calI\models\calT$ and $C^\calI\neq\emptyset$. This problem has been fully classified w.r.t.\ Post's lattice in \cite{ms11}. Further the \emph{Subsumption problem, $\SUBS_\calQ(B)$,} given a $B$-$\calQ$-TBox and two $B$-$\calQ$-concepts $C,D$, asks if for every interpretation $\calI$ it holds that $\calI\models\calT$ implies $C^\calI\subseteq D^\calI$.\bigskip

As subsumption is an inference problem within DL some kind of connection in terms of reductions to propositional implication is not devious. 
In \cite{bmtv09} Beyersdorff \etal\ classify the propositional implication problem $\IMP$ with respect to all fragments parameterized by all Boolean clones. 

\begin{theorem}[\cite{bmtv09}]\label{thm:imp}
Let $B$ be a finite set of Boolean operators.
\begin{enumerate}
	\item If $C\subseteq[B]$ for $C\in\{\CloneS_{00},\CloneD_2,\CloneS_{10}\}$, then $\IMP(B)$ is $\co\NP$-complete \wrt $\leq^{\AC0}_m$\footnote{A language $A$ is $\AC0$ many-one reducible to a language $B$ ($A\leq^{\AC0}_m B$) if there exists a function $f$ computable by a logtime-uniform $\AC0$-circuit familiy such that $x\in A$ iff $f(x)\in B$ (for more information, see \cite{vol99}).}.
	\item If $\CloneL_2\subseteq[B]\subseteq\CloneL$, then $\IMP(B)$ is $\xor\LOGSPACE$-complete \wrt $\leq^{\AC0}_m$.
	\item If $\CloneN_2\subseteq[B]\subseteq\CloneN$, then $\IMP(B)$ is in $\AC0[2]$.
	\item Otherwise $\IMP(B)\in\AC0$.
\end{enumerate} \NoEndMark
\end{theorem}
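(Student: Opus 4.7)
The plan is to exploit the rigidity of Post's lattice: only finitely many clones exist, and whenever $[B_1]\subseteq[B_2]$ every $B_1$-formula rewrites as a $B_2$-formula via a fixed local macro-substitution computable in $\AC0$, a rewriting that preserves implication. Thus it suffices to prove each upper bound for the maximal clone of its case and each lower bound for the minimal clones of its case. The trivial $\co\NP$ upper bound (a falsifying assignment is a short witness) handles case one uniformly, so the real work splits into three remaining upper bounds and three families of lower bounds.

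\textbf{Upper bounds.} If $[B]\subseteq\CloneL$ every formula is an affine polynomial over $\mathrm{GF}(2)$, so $\varphi\not\to\psi$ iff the affine system $\varphi=1,\psi=0$ has a joint solution; since solvability of affine systems is in $\xor\LOGSPACE$ and this class is closed under complement, $\IMP(B)\in\xor\LOGSPACE$. If $[B]\subseteq\CloneN$ every formula is equivalent to either a literal or a constant, and implication reduces to comparing the parity of negations along the two parse-paths, an $\AC0[2]$ computation. In the remaining $\AC0$-case (conjunctive, disjunctive, projection, and constant clones) both $\varphi$ and $\psi$ have their semantics determined by a simple coordinatewise predicate on the input, so implication is testable by a constant-depth circuit.

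\textbf{Lower bounds.} For case one we reduce the tautology problem to $\IMP(B)$ for each of $\CloneS_{00},\CloneD_2,\CloneS_{10}$. The implication provides a handle for pinning auxiliary variables to specific truth values through the antecedent or consequent, and this lets us synthesize the connectives that are syntactically absent: with $\mathrm{maj}\{x,y,z\}$ a forced $\true$-input yields disjunction while a forced $\false$-input yields conjunction; the base $x\lor(y\land z)$ of $\CloneS_{00}$ gives full monotone completeness under analogous forcings; and $\CloneS_{10}$ is handled dually. The negation missing in each restricted language is emulated by the polarity of the implication itself. For case two we reduce an $\xor\LOGSPACE$-complete linear-algebra problem over $\mathrm{GF}(2)$ --- for instance non-solvability of a system of XOR-equations, packaged into a pair of affine formulas via auxiliary variables --- to $\IMP(B)$ for any base containing $\CloneL_2$.

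The principal technical obstacle is case one. Each of $\CloneS_{00},\CloneD_2,\CloneS_{10}$ sits at the boundary of $\co\NP$-hardness: it forbids one critical ingredient (negation, a constant, or a specific monotone operation), yet the reduction must still encode arbitrary Boolean structure. Requiring $\leq^{\AC0}_m$-reductions rather than mere logspace ones forces the gadgets to be constant-size and non-recursive, and this is where the bulk of the effort resides; by contrast, the affine and negation-only bounds are essentially exercises in linear algebra and bookkeeping.
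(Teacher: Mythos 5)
This statement is imported verbatim from \cite{bmtv09}; the paper contains no proof of it, so there is nothing internal to compare your attempt against. Measured against the cited source, your overall architecture is the right one (upper bounds for the maximal clone of each case, lower bounds for the minimal ones, trivial $\co\NP$ membership, affine linear algebra for the $\CloneL$ cases, negation-parity counting for the $\CloneN$ cases), but one step as described would fail.

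The gap is the $\co\NP$-hardness in case one. All three minimal clones $\CloneS_{00}$, $\CloneD_2$, $\CloneS_{10}$ are monotone, so even after your forcing gadgets have synthesized every missing monotone connective and both constants, every formula you can build is still monotone --- and then you cannot ``reduce the tautology problem'' by expressing the given instance, because a general TAUT instance contains negations that no monotone formula represents (and monotone tautology is itself trivial: a monotone formula is valid iff it is true under the all-false assignment). The real content of this case is the separate, nontrivial fact that \emph{implication between two monotone formulas} is $\co\NP$-hard, which is proved by a renaming gadget: replace each negative literal $\lnot x_i$ of a CNF instance by a fresh variable $y_i$ and use an antecedent of the form $\bigwedge_i(x_i\lor y_i)$, relying on monotonicity (one may pass to minimal satisfying assignments, where never both $x_i$ and $y_i$ hold) to enforce that $y_i$ behaves as $\lnot x_i$. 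Your remark that ``the negation missing \dots is emulated by the polarity of the implication itself'' gestures in this direction, but the renaming construction is the crux of the whole case and is absent; as written, the reduction from TAUT does not exist. A secondary soft spot: for $\CloneD_2$ the forcing itself is delicate, since $\mathrm{maj}$ provides no outermost $\lor$ or $\land$ with which to attach a forcing variable to the antecedent or consequent, so ``analogous forcings'' needs an explicit majority gadget rather than an appeal to duality.
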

\section{Interreducibilities}
The next lemma proves base independence for the subsumption problem. This is important to generalize the results to clones corresponding to their respective base. In particular, this kind of property enables us to use standard bases for every clone within our proofs. The result is proven in the same way as in \cite[Lemma 4]{ms11b}.

\begin{lemma}\label{lem:baseind}\NoEndMark
 Let $B_1,B_2$ be two sets of Boolean operators such that $[B_1]\subseteq[B_2]$, and let $\calQ \subseteq \{\exists,\forall\}$.
Then $\SUBS_\calQ(B_1)\leqlogm\SUBS_\calQ(B_2)$.
\end{lemma}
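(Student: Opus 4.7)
The plan is to build a standard name-introduction reduction. Because $B_1$ is finite and every $f\in B_1$ lies in $[B_1]\subseteq[B_2]$, for each $f\in B_1$ we can fix once and for all a $B_2$-formula $t_f(x_1,\dots,x_n)$ representing $f$; these finitely many terms have constant size and can be hardcoded into the reduction machine.

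Given an input $(\calT,C,D)$ for $\SUBS_\calQ(B_1)$, I would walk through all subconcepts $E$ appearing in $\calT$, $C$, and $D$ and introduce a fresh atomic concept name $A_E$ for each non-atomic one (and set $A_E:=E$ when $E$ is already atomic). The new TBox $\calT'$ consists of the following $B_2$-$\calQ$-axioms: for every composite subconcept $E=\circ_f(E_1,\dots,E_n)$ the axiom $A_E\equiv t_f(A_{E_1},\dots,A_{E_n})$; for $E=\exists R.F$ with $\exists\in\calQ$ the axiom $A_E\equiv\exists R.A_F$; for $E=\forall R.F$ with $\forall\in\calQ$ the axiom $A_E\equiv\forall R.A_F$; and for every original axiom $G\dsub H$ in $\calT$ the axiom $A_G\dsub A_H$. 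The output instance is $(\calT',A_C,A_D)$. Since each local step writes only a constant-size block (the name $A_E$ is determined by the position of $E$ in the input), the reduction runs in logarithmic space.

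For correctness, the key observation is that any model $\calI$ of $\calT$ extends to a model $\calI'$ of $\calT'$ by setting $A_E^{\calI'}:=E^\calI$ for each introduced name, which is forced by a straightforward induction on the structure of $E$ using the defining axioms; and conversely any model of $\calT'$ yields, by restricting the interpretation to the original signature, a model of $\calT$ in which again $A_E^{\calI'}=E^\calI$. Consequently $C^\calI\subseteq D^\calI$ holds in every model of $\calT$ iff $A_C^{\calI'}\subseteq A_D^{\calI'}$ holds in every model of $\calT'$.

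The only non-routine point is avoiding the blow-up that would occur if one naively inlined $t_f$ inside $t_g$ inside $t_h$, etc.; the fresh-name trick above sidesteps this by keeping each definition of constant depth. Everything else is a structural induction identical in spirit to \cite[Lemma~4]{ms11b}, the quantifiers posing no additional difficulty because the set $\calQ$ of available quantifiers is identical on both sides of the reduction.
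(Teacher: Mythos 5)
Your reduction is correct and is essentially the standard argument the paper delegates to \cite[Lemma~4]{ms11b}: replace each $B_1$-operator by a fixed $B_2$-representation, using fresh concept names with defining TBox axioms to keep the output polynomial, and verify both directions by extending/restricting models. You also correctly identify and handle the only genuine pitfall (exponential blow-up from naive inlining of the $t_f$'s), so nothing is missing.
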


The following two lemmata deal with a duality principle of subsumption. The correctness of contraposition for axioms allows us to state reduction to the fragment parameterized by the dual operators. Further having access to negation allows us in the same way as in \cite{MS10} to simulate both constants.

\begin{lemma}\label[lemma]{lem:contrapositionSUBS} \NoEndMark
	Let $B$ be a finite set of Boolean operators and $\calQ\subseteq\{\forall,\exists\}$. 	
Then $\SUBS_\calQ(B)\leqlogm\SUBS_{\dual\calQ}(\dual B)$.
\end{lemma}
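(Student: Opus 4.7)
The plan is to push negation inward using a standard contraposition-plus-duality argument, so that an axiom $C \dsub D$ is transformed into $D' \dsub C'$ where $C'$ and $D'$ are syntactic representatives of $\neg C$ and $\neg D$ built from $\dual B$ and $\dual \calQ$.

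Concretely, I define a syntactic translation $T$ on concepts by $T(A)=A$ for atomics, $T(\circ_f(C_1,\dots,C_n)) = \circ_{\dual f}(T(C_1),\dots,T(C_n))$, $T(\exists R.C) = \forall R.T(C)$, and $T(\forall R.C) = \exists R.T(C)$. By construction $T(C)$ is a $\dual B$-$\dual\calQ$-concept, and $T$ is obviously computable in logarithmic space (a local tree rewriting). The reduction then maps a $\SUBS_\calQ(B)$-instance $(\calT,C,D)$ with $\calT = \{C_i \dsub D_i\}_i$ to the $\SUBS_{\dual\calQ}(\dual B)$-instance $(T(\calT), T(D), T(C))$, where $T(\calT) := \{T(D_i) \dsub T(C_i)\}_i$ reverses each axiom.

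For correctness, I associate with every interpretation $\calI=(\Delta^\calI,\cdot^\calI)$ its complement interpretation $\calI^c$ having the same domain, the same interpretation of every role, and $A^{\calI^c} := \Delta^\calI \setminus A^\calI$ on atomics. The map $\calI \mapsto \calI^c$ is an involution on the class of interpretations. The central lemma, proven by induction on the concept, is
\[
T(C)^{\calI^c} \;=\; \Delta^\calI \setminus C^\calI.
\]
The atomic case is by definition. For the Boolean case, the identity $\neg f(x_1,\dots,x_n)=\dual f(\neg x_1,\dots,\neg x_n)$ matches exactly how $\circ_{\dual f}$ is evaluated in $\calI^c$ once each $\|x\in T(C_i)^{\calI^c}\| = \neg \|x \in C_i^\calI\|$ is substituted in. For the quantifier case the standard dualities $\neg\exists R.C \equiv \forall R.\neg C$ and $\neg\forall R.C \equiv \exists R.\neg C$ give the inductive step.

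Using this identity, $\calI^c \models T(D) \dsub T(C)$ iff $\Delta^\calI \setminus D^\calI \subseteq \Delta^\calI \setminus C^\calI$ iff $\calI \models C \dsub D$; applied axiom-wise this yields $\calI^c \models T(\calT)$ iff $\calI \models \calT$. Since the complement map is a bijection on interpretations, $\calT \models C \dsub D$ is therefore equivalent to $T(\calT) \models T(D) \dsub T(C)$, proving the reduction is correct. The main things to get right are the indexing (the Boolean and quantifier cases have to produce precisely the dual operators/quantifiers, not merely negated ones) and ensuring that complementing on atomics is enough to carry the negation through the whole structure; the quantifier clause in the semantics, together with $\dual f$'s defining equation, makes both work out without having to introduce constants or fresh concept names.
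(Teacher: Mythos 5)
Your proof is correct and follows essentially the same route as the paper: contrapose each axiom and dualize all operators and quantifiers. The only real difference is cosmetic --- the paper puts $\lnot C$ into negation normal form and, when $\lnot\notin[B]$, renames each resulting negated atom $\lnot A$ to a fresh atom $A'$, whereas you keep the atom $A$ itself and justify this semantically via the complement interpretation $\calI^c$, which lets you avoid the paper's case distinction on whether $\lnot\in[B]$.
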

\begin{proof}
 Here we distinguish two cases. Given a concept $A$ define with $A^\lnot$ the concept $\lnot A$ in negation normal form (NNF). 
 
 First assume that $\lnot\in[B]$. Then $(\calT,C,D)\in\SUBS_\calQ(B)$ if and only if for any interpretation $\calI$ s.t.\ $\calI\models\calT$ it holds that $C^\calI\subseteq D^\calI$ if and only if for any interpretation $\calI$ s.t.\ $\calI\models\calT':=\{F^\lnot\dsub E^\lnot\mid E\dsub F\in\calT\}$ it holds that $(\lnot D)^\calI\subseteq(\lnot C)^\calI$ if and only if $(\calT',D^\lnot,C^\lnot)\in\SUBS_{\dual\calQ}(\dual B)$. The correctness directly follows from $\dual\lnot=\lnot$.
 
 Now assume that $\lnot\notin[B]$. Then for a given instance $(\calT,C,D)$ it holds that for the contraposition instance $(\{F^\lnot\dsub E^\lnot\mid E\dsub F\in\calT\}, D^\lnot, C^\lnot)$ before every atomic concept occurs a negation symbol. Denote with $(\{F^\lnot\dsub E^\lnot\mid E\dsub F\in\calT\}, D^\lnot, C^\lnot)^{\textrm{pos}}$ the substitution of any such  negated atomic concept $\lnot A$ by a fresh concept name $A'$. Then $(\calT,C,D)\in\SUBS_\calQ(B)$ iff $(\{F^\lnot\dsub E^\lnot\mid E\dsub F\in\calT\}, D^\lnot, C^\lnot)^{\textrm{pos}}\in\SUBS_{\dual\calQ}(\dual B)$.
\end{proof}

\begin{lemma}\label[lemma]{lem:topbot-always-above-negSUBS} \NoEndMark
  Let $B$ be a finite set of Boolean operators s.t. $\CloneN_2\subseteq[B]$ and $\calQ \subseteq \{\exists,\forall\}$. 
 Then it holds that $\SUBS_\calQ(B)\equivlogm\SUBS_\calQ(B\cup\{\true,\false\})$.
\end{lemma}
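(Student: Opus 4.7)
The plan is to prove the two directions separately, with the nontrivial one going from the larger fragment to the smaller. The easy direction, $\SUBS_\calQ(B)\leqlogm\SUBS_\calQ(B\cup\{\true,\false\})$, is immediate from \Cref{lem:baseind} since $[B]\subseteq[B\cup\{\true,\false\}]$. For the other direction I would invoke \Cref{lem:baseind} once more to assume, without loss of generality, that $\neg$ is itself an operator in $B$ (this is justified by $\CloneN_2\subseteq[B]$). From then on the task is to simulate the two constants by fresh atomic concepts whose interpretation is pinned down by suitable extra axioms.

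Concretely, given an instance $(\calT,C,D)$ of $\SUBS_\calQ(B\cup\{\true,\false\})$, I would pick two fresh concept names $T$ and $F$, form $\calT'$ by replacing every occurrence of $\true$ with $T$ and every occurrence of $\false$ with $F$ throughout $\calT$, $C$, $D$, and additionally append the two axioms
$$
\neg T\dsub T \qquad\text{and}\qquad F\dsub\neg F
$$
to $\calT'$. The output is $(\calT', C', D')$, clearly computable in logarithmic space, and it is a legitimate $B$-$\calQ$-instance because only $\neg\in B$ and the subsumption symbol are used in the new axioms. The quantifier set $\calQ$ is untouched.

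The correctness claim I would then verify is: $(\calT,C,D)\in\SUBS_\calQ(B\cup\{\true,\false\})$ iff $(\calT',C',D')\in\SUBS_\calQ(B)$. For the ``if'' direction, the axiom $\neg T\dsub T$ says $\Delta^\calI\setminus T^\calI\subseteq T^\calI$, hence $T^\calI=\Delta^\calI$, and $F\dsub\neg F$ says $F^\calI\subseteq\Delta^\calI\setminus F^\calI$, hence $F^\calI=\emptyset$; consequently every model $\calI$ of $\calT'$ interprets $T$ and $F$ exactly as $\true$ and $\false$ would be interpreted, so $\calI$ restricted to the original signature is a model of $\calT$ and $C'^\calI=C^\calI$, $D'^\calI=D^\calI$. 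Conversely, any model $\calI$ of $\calT$ can be extended to a model of $\calT'$ by setting $T^\calI:=\Delta^\calI$ and $F^\calI:=\emptyset$, and this extension preserves the interpretations of $C$ and $D$.

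No serious obstacle is expected: the only subtlety is that the constants must be simulated by \emph{global} properties of $T$ and $F$, which is exactly what TBox axioms guarantee (since they must hold in every interpretation under consideration). Once $\neg$ is available, the two one-line axioms above do all the work, and the argument is analogous to the corresponding step for \TCSAT\ in \cite{MS10}.
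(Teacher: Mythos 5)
Your proposal is correct and matches the paper's intended argument: the paper gives no explicit proof but defers to the standard constant-simulation technique of \cite{MS10}, which is exactly your construction of forcing fresh concepts $T$ and $F$ to be interpreted as $\Delta^\calI$ and $\emptyset$ via the axioms $\neg T\dsub T$ and $F\dsub\neg F$, with \Cref{lem:baseind} handling the trivial direction and the normalization of $\neg$ into $B$.
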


Using Lemma 4.2 in \cite{bmtv09} we can easily obtain the ability to express the constant $\true$ whenever we have access to conjunctions, and the constant $\false$ whenever we are able to use disjunctions.

\begin{lemma}\label[lemma]{lem:constants-for-s00-s01} \NoEndMark
Let $B$ be a finite set of Boolean operators and $\calQ\subseteq\{\forall,\exists\}$.
\begin{enumerate}
	\item If $\CloneE_0\subseteq[B]$, then $\SUBS_\calQ(B)\equivlogm\SUBS_\calQ(B\cup\{\true\})$.
	\item If $\CloneV_0\subseteq[B]$, then $\SUBS_\calQ(B)\equivlogm\SUBS_\calQ(B\cup\{\false\})$.
\end{enumerate}
\end{lemma}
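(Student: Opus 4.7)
The plan splits along the two items: (2) falls out immediately from base-independence, while (1) calls for a genuine substitution argument.

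For (2), the base $\{x\lor y,\false\}$ of $\CloneV_0$ already contains $\false$, so the hypothesis $\CloneV_0\subseteq[B]$ forces $\false\in[B]$ and hence $[B\cup\{\false\}]=[B]$. \Cref{lem:baseind} then delivers $\SUBS_\calQ(B)\equivlogm\SUBS_\calQ(B\cup\{\false\})$ at once.

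For (1) the direction $\SUBS_\calQ(B)\leqlogm\SUBS_\calQ(B\cup\{\true\})$ is again immediate from \Cref{lem:baseind}, so the substance lies in the reverse reduction $\SUBS_\calQ(B\cup\{\true\})\leqlogm\SUBS_\calQ(B)$. My plan is to imitate the propositional trick of \cite[Lemma~4.2]{bmtv09}: given $(\calT,C,D)$, pick a fresh atomic concept $T$ to stand in for $\true$ and recursively rewrite each concept $E$ to $E^*$ by setting $\true^* := T$, leaving every other atomic concept and $\false$ untouched, pushing $*$ through every Boolean operator from $B$, keeping $(\forall R.E)^* := \forall R.E^*$, and relativizing $(\exists R.E)^* := \exists R.(E^* \land T)$. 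The reduced instance is $(\calT^*, C^* \land T, D^*)$ with $\calT^* := \{E^* \dsub F^* \mid E \dsub F \in \calT\}$; the extra conjunct $\land T$ on the left-hand side is the DL counterpart of conjoining the fresh variable to the hypothesis in the propositional argument of \cite{bmtv09}, and it makes the new subsumption trivially true whenever $T^{\calI'}=\emptyset$.

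To verify correctness I would prove both contrapositives. If the original subsumption fails in some $\calI\models\calT$ at a witness $x\in C^\calI\setminus D^\calI$, the extension $\calI'$ with $T^{\calI'} := \Delta^\calI$ collapses every $E^*$ back to $E$, so $\calI'\models\calT^*$ and the same $x$ witnesses failure of the new subsumption. Conversely, given $\calI'\models\calT^*$ with witness $x\in(C^*\land T)^{\calI'}\setminus(D^*)^{\calI'}$, I would restrict $\calI'$ to the sub-interpretation $\calI$ on the subdomain $T^{\calI'}$ (with atomic concepts and roles restricted accordingly) and prove by induction on concept structure the identity $(E^*)^{\calI'}\cap T^{\calI'}=E^\calI$ for every $B\cup\{\true\}$-concept $E$, where $\true$ is interpreted as $\Delta^\calI=T^{\calI'}$; from this $\calI\models\calT$ and the desired counterexample at $x$ follow.

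The step I expect to be the main obstacle is the inductive clause for $\forall R$. The naive choice $(\forall R.E)^* := \forall R.E^*$ is a priori too strict because $R$-successors of $x$ that lie outside $T^{\calI'}$ are also required to lie in $(E^*)^{\calI'}$, and this mismatch can spoil the identity $(E^*)^{\calI'}\cap T^{\calI'}=E^\calI$. To discharge the problem I would first preprocess the input to a fixed point under the validities $\forall R.\true\equiv\true$ and $E\land\true\equiv E$ (the latter applicable because $\land\in[B]$), which pushes every remaining $\true$ either to the top of a concept or directly underneath an $\exists R$; after this normalization no $\forall R$ sits immediately above a $\true$, and the critical clause of the induction is reduced to a routine case check. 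The normalization itself is a local tree rewrite, hence computable in logarithmic space, so the overall reduction stays within $\leqlogm$.
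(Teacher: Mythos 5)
Your dispatch of item~2 is sound for the statement as literally written: $\false$ belongs to the base of $\CloneV_0$, so $[B\cup\{\false\}]=[B]$ and \Cref{lem:baseind} closes the case. (Be aware, though, that the lemma is later applied to clones above $\CloneS_{00}$, which do \emph{not} contain $\false$; the intended hypothesis is evidently $\CloneV_2\subseteq[B]$, under which item~2 is no longer trivial and requires the dual of your item-1 construction.) The genuine problem is item~1: your reduction is incorrect as soon as $\forall\in\calQ$, and the normalization you propose does not repair it. Take $\calT=\{\true\dsub A,\ \forall R.A\dsub G\}$, $C=H$, $D=G$ with $A,G,H$ atomic. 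This instance is already in your normal form (no subconcept $\forall R.\true$ or $E\land\true$ occurs), and it is a YES instance of $\SUBS_\forall(\{\land,\false,\true\})$: every model must have $A^\calI=\Delta^\calI$, hence $(\forall R.A)^\calI=\Delta^\calI$, hence $G^\calI=\Delta^\calI\supseteq H^\calI$. Your image instance is $(\{T\dsub A,\ \forall R.A\dsub G\},\ H\land T,\ G)$, and it is a NO instance: take $\Delta^{\calI'}=\{x,y\}$, $R^{\calI'}=\{(x,y)\}$, $T^{\calI'}=A^{\calI'}=H^{\calI'}=\{x\}$ and $G^{\calI'}=\{y\}$. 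Both axioms hold (the only element of $(\forall R.A)^{\calI'}$ is $y$), yet $x\in(H\land T)^{\calI'}\setminus G^{\calI'}$. So the obstacle you flagged is not confined to $\true$ sitting directly under a $\forall R$: an element outside $T^{\calI'}$ can falsify \emph{any} $\forall R.F$ at an element inside $T^{\calI'}$, and indeed your identity already breaks for $E=\forall R.A$ in this model, where $(E^*)^{\calI'}\cap T^{\calI'}=\emptyset$ while $E^\calI=\{x\}$ in the restriction.

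The missing idea is to force $T$ to cover the relevant part of the model rather than to relativize the quantifiers. Add to $\calT^*$ the propagation axioms $T\dsub\forall R.T$ for every role $R$ occurring in the instance; these are expressible exactly when $\forall\in\calQ$, which is the only case where your argument fails (for $\calQ\subseteq\{\exists\}$ your relativized construction and restriction argument do go through). With these axioms, every element reachable from a witness $x\in(C^*\land T)^{\calI'}$ lies in $T^{\calI'}$, and instead of restricting to $T^{\calI'}$ you pass to the submodel generated by $x$: generated submodels preserve the extension of every $\mathcal{ALC}$ concept at their elements, all of their elements lie in $T^{\calI'}$, and interpreting $\true$ as the full subdomain then makes the induction, including the $\forall R$ clause, go through in both directions. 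This is the correct lift of the propositional trick of \cite{bmtv09} to the quantified setting; note that in the counterexample above the augmented TBox would force $y\in T^{\calI'}$ and hence $y\in A^{\calI'}$, destroying the countermodel.
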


The connection of subsumption to terminology satisfiability and propositional implication is crucial for stating upper and lower bound results. The next lemma connects subsumption to $\TCSAT$ and also to $\IMP$.

\begin{lemma}\label[lemma]{lem:subs-csat} \NoEndMark
	Let $B$ be a finite set of Boolean operators and $\calQ\subseteq\{\forall,\exists\}$ be a set of quantifiers. Then the following reductions hold:
	\begin{enumerate}
		\item $\IMP(B)\leqlogm\SUBS_\emptyset(B)$.\label{lemitem:imp_lowerbound}
		\item $\SUBS_\calQ(B)\leqlogm\overline{\TCSAT_\calQ(B\cup\{\nimp\})}$.\label{lemitem:coTCSAT_upperbound}
		\item $\overline{\TCSAT_\calQ(B)}\leqlogm\SUBS_\calQ(B\cup\{\false\})$.\label{lemitem:coTCSAT_lowerbound}
	\end{enumerate}
\end{lemma}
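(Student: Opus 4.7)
The plan is to handle each of the three reductions as a separate short argument; all three are logspace many-one reductions obtained by direct inspection of the semantic definitions.

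For part~\ref{lemitem:imp_lowerbound}, the idea is to introduce a fresh atomic concept $A$ that does not appear in the given $\IMP(B)$-instance and let it play the role of a ``universe marker'' in place of the possibly unavailable constant $\true$. Given $(\{\phi_1,\dots,\phi_n\},\psi)$, the reduction outputs the TBox $\calT:=\{A\dsub\phi_i\mid 1\le i\le n\}$ together with the subsumption question $A\dsub\psi$. This is a valid $B$-$\emptyset$ instance because $A$ is atomic and each $\phi_i$ is already a $B$-formula. Correctness in one direction: if $\{\phi_i\}\models\psi$ propositionally, then in any $\calI\models\calT$ every $x\in A^\calI$ determines a propositional valuation at $x$ satisfying all $\phi_i$ and hence $\psi$, so $A^\calI\subseteq\psi^\calI$. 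Conversely, a valuation witnessing $\Gamma\not\models\psi$ lifts to a one-point interpretation with $A^\calI=\Delta^\calI$ that satisfies $\calT$ but violates $A\dsub\psi$.

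Parts~\ref{lemitem:coTCSAT_upperbound} and~\ref{lemitem:coTCSAT_lowerbound} are duality observations. For~\ref{lemitem:coTCSAT_upperbound}, the map $(\calT,C,D)\mapsto(\calT,C\nimp D)$ works via the pointwise equivalence $C^\calI\not\subseteq D^\calI\iff(C\nimp D)^\calI\neq\emptyset$: the subsumption holds iff no model of $\calT$ admits such a witness, which is precisely non-membership in $\TCSAT_\calQ(B\cup\{\nimp\})$. For~\ref{lemitem:coTCSAT_lowerbound}, the map $(\calT,C)\mapsto(\calT,C,\false)$ works because unsatisfiability of $C$ with respect to $\calT$ says every model $\calI\models\calT$ has $C^\calI=\emptyset=\false^\calI$, i.e., $\calT\models C\dsub\false$; the extra constant $\false$ is needed as the right-hand concept, explaining the target $\SUBS_\calQ(B\cup\{\false\})$.

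The only genuine obstacle is in part~\ref{lemitem:imp_lowerbound}: the reduction must stay within $[B]$, so the natural encoding via $\true\dsub\phi_i$ is not available in general. Using a fresh atomic concept $A$ avoids invoking any Boolean operator at all, which is what makes the reduction uniform across all clones. The other two parts are pure definition chases and present no real difficulty.
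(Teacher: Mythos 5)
Your proposal is correct, and parts~2 and~3 are exactly the paper's proof: the maps $(\calT,C,D)\mapsto(\calT,C\nimp D)$ and $(\calT,C)\mapsto(\calT,C,\false)$ with the same pointwise justifications. The only genuine difference is in part~1. The paper translates the implication instance directly into two concepts $C_\varphi,C_\psi$ (variables become atomic concepts, connectives become the corresponding operators) and asks the subsumption question $C_\varphi\dsub C_\psi$ with respect to the \emph{empty} TBox; the pointwise correspondence between elements of $\Delta^\calI$ and propositional valuations then does all the work. You instead introduce a fresh atomic concept $A$, put the premises into the TBox as axioms $A\dsub\phi_i$, and ask $A\dsub\psi$. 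Both reductions are correct and rest on the same semantic observation. Your variant has a concrete advantage: it handles $\IMP$ instances with a \emph{set} of premises $\Gamma=\{\phi_1,\dots,\phi_n\}$ (which is how $\IMP$ is defined in the cited source) without needing $\land\in[B]$ to combine them, whereas the paper's empty-TBox reduction as written only covers a single premise and would need conjunction, or a restriction of $\IMP$ to single-premise instances, to cover the general case. The paper's version is slightly more economical (no TBox at all) and suffices for transferring the hardness results actually used later, since those hold already for single-premise $\IMP$. Your closing remark that the na\"ive encoding $\true\dsub\phi_i$ is unavailable when $\true\notin[B]$ correctly identifies why some such workaround is needed in the clone-parameterized setting.
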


\begin{proof}
\begin{enumerate}
	\item Holds due to $(\varphi,\psi)\in\IMP(B)$ iff $(C_\varphi,C_\psi,\emptyset)\in\SUBS_\emptyset(B)$, for concept descriptions $C_\varphi=f(\varphi),C_\psi=f(\psi)$ with $f$ mapping propositional formulae to concept descriptions via
\begin{align*}
	f(\true) &= \true, \text{ and } f(\false) = \false,\\
	f(x) &= C_x, \text{ for variable }x,\\
	f(g(C_1,\dots,C_n)) &= \fop g(f(C_1),\dots,f(C_n))
\end{align*}
where $g$ is an $n$-ary Boolean function and $\fop g$ is the corresponding operator.
	\item $(C,D,\calT)\in\SUBS_\calQ(B)$ iff $(\calT,C\nimp D)\in\overline{\TCSAT_\calQ(B\cup\{\nimp\})}$. \cite{DLHB}.
	\item $(\calT,C)\in\overline{\TCSAT_\calQ(B)}$ iff $(C,\false,\calT)\in\SUBS_\calQ(B\cup\{\false\})$. \cite{DLHB}.
\end{enumerate}

\end{proof}
\section{Main Results}
We will start with the subsumption problem using no quantifiers and will show that the problem either is $\co\NP$-, $\P$-, $\NL$-complete, or is $\xor\LOGSPACE$-hard.

\begin{theorem}[No quantifiers available.]\label{thm:subs_empty} \NoEndMark
Let $B$ be a finite set of Boolean operators.
\begin{enumerate}
	\item If $X\subseteq[B]$ for $X\in\{\CloneL_0,\CloneL_1,\CloneL_3,\CloneS_{10},\CloneS_{00},\CloneD_2\}$, then $\SUBS_\emptyset(B)$ is $\co\NP$-complete.
	\item If $\CloneE_2\subseteq[B]\subseteq\CloneE$ or $\CloneV_2\subseteq[B]\subseteq\CloneV$, then $\SUBS_\emptyset(B)$ is $\P$-complete.
	\item If $[B]=\CloneL_2$, then $\SUBS_\emptyset(B)$ is $\xor\LOGSPACE$-hard.
	\item If $\CloneI_2\subseteq[B]\subseteq\CloneN$, then $\SUBS_\emptyset(B)$ is $\NL$-complete.
\end{enumerate}
All hardness results hold \wrt $\leqlogm$ reductions.
\end{theorem}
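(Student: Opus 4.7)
The plan is to treat the four parts separately, invoking base-independence (Lemma~\ref{lem:baseind}) throughout so that one may work with the standard base from Table~\ref{tab:base} for each clone.

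For the upper bounds in parts 1 and 3, I would reduce $\SUBS_\emptyset(B)$ to $\overline{\TCSAT_\emptyset(B\cup\{\nimp\})}$ via the second item of Lemma~\ref{lem:subs-csat}. Since no quantifiers appear, $\TCSAT_\emptyset$ reduces to propositional satisfiability of the TBox together with the concept, hence lies in $\NP$, giving the $\co\NP$ upper bound. For the $\P$ upper bound in part 2, $B$-concepts for $\CloneE_2\subseteq[B]\subseteq\CloneE$ are conjunctions of atoms (and possibly constants), so each axiom is a Horn implication and subsumption reduces to Horn-SAT-style forward chaining in polynomial time; the dual range $\CloneV_2\subseteq[B]\subseteq\CloneV$ follows by Lemma~\ref{lem:contrapositionSUBS}. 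For the $\NL$ upper bound in part 4, $\CloneN$-concepts are single literals, axioms translate into implications between literals (together with their contrapositives), and the problem becomes directed $s$-$t$-reachability in the resulting implication graph.

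For the lower bounds, whenever one of $\CloneS_{00},\CloneD_2,\CloneS_{10}$ is contained in $[B]$, $\co\NP$-hardness in part 1 follows directly from the first item of Lemma~\ref{lem:subs-csat} together with the first clause of Theorem~\ref{thm:imp}. Analogously, the $\xor\LOGSPACE$-hardness for $[B]=\CloneL_2$ in part 3 is immediate from the first item of Lemma~\ref{lem:subs-csat} and the second clause of Theorem~\ref{thm:imp}. For the affine clones $\CloneL_0,\CloneL_1,\CloneL_3$ in part 1, $\IMP$ is only $\xor\LOGSPACE$-hard, so the TBox must do the work: I would reduce from an $\NP$-complete problem such as $3$-$\SAT$ by introducing for each clause auxiliary atomic concepts tied together by XOR/XNOR equivalences in the TBox, and then using implicational axioms $A\dsub B$ between affine concepts as the non-affine ingredient that enforces clause satisfaction. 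For $\P$-hardness in part 2, I would reduce from Horn-SAT by encoding each Horn rule $x_1\land\dots\land x_k\Rightarrow y$ as a single axiom $A_{x_1}\land\dots\land A_{x_k}\dsub A_y$ and posing the query atom as the subsumee, dualizing to the $\CloneV$-range via Lemma~\ref{lem:contrapositionSUBS}. Finally, $\NL$-hardness in part 4 comes from a direct reduction of directed $s$-$t$-reachability, where each edge $(u,v)$ becomes a single axiom $A_u\dsub A_v$.

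The main obstacle is the $\co\NP$-hardness of the affine clones $\CloneL_0,\CloneL_1,\CloneL_3$, where $\IMP$ is too weak and a genuine use of the TBox is required. The key observation is that although every concept is an affine function of its atoms, an axiom $A\dsub B$ corresponds to the \emph{non}-affine propositional constraint $\lnot A\lor B$ on the type space; combining XOR/XNOR equivalences (to introduce auxiliary variables ranging over XOR-sums of atoms) with implicational axioms between those auxiliaries should therefore be enough to simulate an arbitrary $3$-$\SAT$ clause. Treating the three affine subcases in a unified way, and supplying the constants $\true,\false$ through Lemma~\ref{lem:topbot-always-above-negSUBS} or Lemma~\ref{lem:constants-for-s00-s01} where they are not directly expressible in $[B]$, is the technical crux of the argument.
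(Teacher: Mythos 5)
Most of your architecture coincides with the paper's: the $\co\NP$ upper bound via Lemma~\ref{lem:subs-csat}(2.) and the $\NP$-membership of quantifier-free $\TCSAT$, the $\co\NP$-hardness for $\CloneS_{00},\CloneS_{10},\CloneD_2$ via Lemma~\ref{lem:subs-csat}(1.) and Theorem~\ref{thm:imp}, the $\xor\LOGSPACE$-hardness for $\CloneL_2$ likewise, and the $\NL$ case via an implication graph and $\GAP$. Your part~2 differs only cosmetically: the paper proves $\P$-hardness by reducing the hypergraph accessibility problem $\HGAP$ (axioms $u_1\dand u_2\dsub v$), which is the same forward-chaining phenomenon as your Horn-rule encoding, and it obtains the $\P$ upper bound by citing the quantified cases ($\mathcal{ELH}$ subsumption) rather than by a direct argument; your direct argument is fine and arguably more self-contained.

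The genuine gap is the $\co\NP$-hardness for the affine clones $\CloneL_0,\CloneL_1,\CloneL_3$, which you correctly identify as the crux but do not actually prove. The paper does not build a $3$-$\SAT$ gadget here at all: it invokes Lemma~\ref{lem:subs-csat}(3.), i.e., $\overline{\TCSAT_\emptyset(B)}\leqlogm\SUBS_\emptyset(B\cup\{\false\})$, together with the known $\NP$-completeness of $\TCSAT_\emptyset(B)$ for these clones (\cite[Theorem 27]{ms11b}), handling $\CloneL_1$ by the duality of $\xor$ and $\leftrightarrow$ via Lemma~\ref{lem:contrapositionSUBS}. Your sketch ("XOR/XNOR equivalences plus implicational axioms between affine concepts") leaves the clause gadget unspecified, and the obvious candidate fails: a single axiom $(A\xor A')\dsub A''$ imposes the constraint $(x\leftrightarrow y)\lor z$ on an element's type, and this relation is closed under coordinatewise disjunction (dual-Horn), so any TBox built solely from such axioms yields a tractable constraint satisfaction problem and cannot encode $3$-$\SAT$. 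One genuinely needs axioms with non-degenerate affine concepts on \emph{both} sides (e.g., $(A\xor A')\dsub(A''\xor A''')$, whose associated relation is neither Horn, dual-Horn, affine, nor bijunctive) and then a Schaefer-style argument, which is exactly the work hidden in the cited $\TCSAT_\emptyset$ result. A secondary issue: for $\CloneL_0$ neither Lemma~\ref{lem:topbot-always-above-negSUBS} (needs $\CloneN_2\subseteq[B]$) nor Lemma~\ref{lem:constants-for-s00-s01} (needs $\CloneE_0$ or $\CloneV_0$) supplies $\true$, so your plan for "supplying the constants" does not go through there; only $\false=x\xor x$ is available, which is precisely why the paper routes the reduction through $\SUBS_\emptyset(B\cup\{\false\})$.
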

\begin{proof}
	\begin{enumerate}
		\item The reduction from the implication problem $\IMP(B)$ in \Cref{lem:subs-csat}(\ref{lemitem:imp_lowerbound}.) in combination with \Cref{thm:imp} and \Cref{lem:baseind} proves the $\co\NP$ lower bounds of $\CloneS_{10},\CloneS_{00}, \CloneD_2$.
		The lower bounds for $\CloneL_0\subseteq[B]$ and $\CloneL_3\subseteq[B]$ follow from \Cref{lem:subs-csat}(\ref{lemitem:coTCSAT_lowerbound}.) with $\overline{\TCSAT_\emptyset(B)}$ being $\co\NP$-complete which follows from the $\NP$-completeness result of $\TCSAT_\emptyset(B)$ shown in \cite[Theorem 27]{ms11b}. 
		Further the lower bound for $\CloneL_1\subseteq[B]$ follows from the duality of '$\xor$' and '$\equiv$' and \Cref{lem:contrapositionSUBS} with respect to the case $\CloneL_0\subseteq[B]$ enables us to state the reduction
		$$
		\SUBS_\emptyset(\CloneL_0)\leqlogm\SUBS_{\dual\emptyset}(\dual{\CloneL_0})=\SUBS_\emptyset(\CloneL_1).
		$$
		 
		 The upper bound follows from a reduction to $\overline{\TCSAT_\emptyset(\CloneBF)}$ by \Cref{lem:subs-csat}(\ref{lemitem:coTCSAT_upperbound}.) and the membership of $\TCSAT_\emptyset(\CloneBF)$ in $\NP$ by \cite[Theorem 27]{ms11b}.
		 
		\item The upper bound follows from the memberships in $\P$ for $\SUBS_\exists(\CloneE)$ and $\SUBS_\forall(\CloneV)$ proven in \Cref{thm:subs_ex,thm:subs_all}.
		
		The lower bound for $[B]=\CloneE_2$ follows from a reduction from the hypergraph accessibility problem\footnote{
		In a given hypergraph $H=(V,E)$, a hyperedge $e\in E$ is a pair of source nodes $\textit{src}(e)\in V\times V$ and one destination node $\textit{dest}(e)\in V$. Instances of $\HGAP$ consist of a directed hypergraph $H=(V,E)$, a set $S\subseteq V$ of source nodes, and a target node $t\in V$. Now the question is whether there exists a hyperpath from the set $S$ to the node $t$, i.e., whether there are hyperedges $e_1,e_2,\dots,e_k$ such that, for each $e_i$, there are $e_{i_1},\dots,e_{i_\nu}$ with $1\leq i_1,\dots,i_\nu<i$ and $\bigcup_{j\in\{i_1,\dots,i_\nu\}}\textit{dest}(e_j)\cup \textit{src}(e_j)\supseteq \textit{src}(e_i)$, and $\textit{src}(e_1)=S$ and $\textit{dest}(e_k)=t$ \cite{sriy90}.} $\HGAP$: set $\calT=\{u_1\dand u_2\dsub v\mid (u_1,u_2;v)\in E\}$, assume w.l.o.g.\ the set of source nodes as $S=\{s\}$, then $(G,S,t)\in\HGAP$ iff $(\calT,s,t)\in\SUBS_\emptyset(\CloneE_2)$. For the lower bound of $\CloneV_2$ apply \Cref{lem:contrapositionSUBS}.
		\item Follows directly by the reduction from $\IMP(\CloneL_2)$ due to \Cref{thm:imp} and \Cref{lem:subs-csat}(\ref{lemitem:imp_lowerbound}.).
		\item For the lower bound we show a reduction from the graph accessibility problem\footnote{Instances of $\GAP$ are directed graphs $G$ together with two nodes $s,t$ in $G$ asking whether there is a path from $s$ to $t$ in $G$.} $\GAP$ to $\SUBS_\emptyset(\CloneI_2)$.
Let $G=(V,E)$ be a undirected graph and $s,t\in V$ be the vertices for the input. Then for $\calT:=\{(A_u\dsub A_v)\mid(u,v)\in E\}$ it holds that $(G,s,t)\in\GAP$ iff $(\calT,A_s,A_t)\in\SUBS_\emptyset(\CloneI_2)$. 

		For the upper bound we follow the idea from \cite[Lemma 29]{ms11b}. Given the input instance $(\calT,C,D)$ we can similarly assume that for each $E\dsub F\in\calT$ it holds that $E,F$ are atomic concepts, or their negations, or  constants. Now $(\calT,C,D)\in\SUBS_\emptyset(\CloneN)$ holds iff for every interpretation $\calI=(\Delta^\calI,\cdot^\calI)$ and $x\in\Delta^\calI$ it holds that if $x\in C^\calI$ then $x\in D^\calI$ holds iff for the implication graph $G_\calT$ (constructed as in \cite[Lemma 29]{ms11b}) there exists a path from $v_C$ to $v_D$. 
		
		Informally if there is no path from $v_C$ to $v_D$ then $D$ is not implied by $C$, \ie, it is possible to construct an interpretation for which there exists an individual which is a member of $C^\calI$ but not of $D^\calI$.
		
		Thus we have provided a $\co\NL$-algorithm which first checks accordingly to the algorithm in \cite[Lemma 29]{ms11b} if there are not any cycles containing contradictory axioms. Then we verify that there is no path from $v_C$ to $v_D$ implying that $C$ is not subsumed by $D$.
	\end{enumerate}
\end{proof}

Using some results from the previous theorem we are now able to classify most fragments of the subsumption problem using only either the $\forall$ or $\exists$ quantifier with respect to all possible Boolean clones in the following two theorems.

\begin{theorem}[Universal fragments]\label{thm:subs_all} \NoEndMark
Let $B$ be a finite set of Boolean operators.
\begin{enumerate}
	\item If $\CloneE_2\subseteq[B]$, then $\SUBS_\forall(B)$ is $\EXPTIME$-complete.\label{num:subs_all_E2}
	\item If $\CloneN_2\subseteq [B]$ or $\CloneL_0\subseteq[B]$, then $\SUBS_\forall(B)$ is $\EXPTIME$-complete.\label{subs_all_L0_EXP}
	\item If $\CloneL_1\subseteq[B]$, then $\SUBS_\forall(B)$ is $\EXPTIME$-complete.
	\item If $\CloneS_{00}\subseteq[B]$, then $\SUBS_\forall(B)$ is $\EXPTIME$-complete.
	\item If $\CloneD_2\subseteq[B]\subseteq\CloneD_1$, then $\SUBS_\forall(B)$ is $\co\NP$-hard and in $\EXP$.
	\item If $[B]\subseteq\CloneV$, then $\SUBS_\forall(B)$ is $\P$-complete.\label{num:subs_forall_P-complete}
	\item If $[B]=\CloneL_2$, then $\SUBS_\forall(B)$ is $\P$-hard and in $\EXP$.
\end{enumerate}
All hardness results hold \wrt $\leqlogm$ reductions.
\end{theorem}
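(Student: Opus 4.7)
The approach is to handle upper and lower bounds separately, grouping the seven cases by the auxiliary problem each uses. I will rely on base independence (\Cref{lem:baseind}), the contraposition duality (\Cref{lem:contrapositionSUBS}), the constant-adjunction results (\Cref{lem:topbot-always-above-negSUBS,lem:constants-for-s00-s01}), the three bridges in \Cref{lem:subs-csat}, and the classifications of $\IMP$ and of $\TCSAT_\forall$ from \cite{bmtv09,ms11b}.

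For the upper bounds, cases~1--5 and~7 follow uniformly from base independence: since $[B]\subseteq\CloneBF$, \Cref{lem:baseind} yields $\SUBS_\forall(B)\leqlogm\SUBS_\forall(\CloneBF)$, and the right-hand side is dominated by classical $\ALC$ subsumption and hence lies in $\EXPTIME$. For case~6 the polynomial upper bound uses duality: since $\dual{\CloneV}=\CloneE$ and $\dual\forall=\exists$, \Cref{lem:contrapositionSUBS} gives $\SUBS_\forall(B)\leqlogm\SUBS_\exists(\dual B)$ with $[\dual B]\subseteq\CloneE$, and the right-hand side is in $\P$ by \Cref{thm:subs_ex}.

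For the lower bounds, case~1 is the classical $\EXPTIME$-hardness of $\mathcal{FL}_0$ subsumption w.r.t.\ general TBoxes, which already lives in the $\{\land,\forall\}$-fragment and transfers to any $B$ with $\CloneE_2\subseteq[B]$ through \Cref{lem:baseind}. Cases~2 and~4 are handled by \Cref{lem:subs-csat}(\ref{lemitem:coTCSAT_lowerbound}): the $\EXPTIME$-hardness of $\TCSAT_\forall(B)$ for the relevant clones comes from \cite{ms11b}, and the constant $\false$ is either already expressible ($x\oplus x$ in $\CloneL_0$) or made available via \Cref{lem:topbot-always-above-negSUBS} (for $\CloneN_2\subseteq[B]$) or \Cref{lem:constants-for-s00-s01}(2) (for $\CloneS_{00}$, which contains $\lor$). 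Case~3 is routed through contraposition: $\dual{\CloneL_1}=\CloneL_0$ and $\dual\forall=\exists$, so the $\EXPTIME$-hardness on the existential side (supplied by \Cref{thm:subs_ex}) combines with \Cref{lem:contrapositionSUBS} to yield $\SUBS_\exists(\CloneL_0)\leqlogm\SUBS_\forall(\CloneL_1)$. Case~5's $\co\NP$-hardness follows from \Cref{thm:imp}(1) together with \Cref{lem:subs-csat}(\ref{lemitem:imp_lowerbound}), absorbed into $\SUBS_\forall(B)$ by the trivial embedding $\SUBS_\emptyset(B)\leqlogm\SUBS_\forall(B)$. Case~6 inherits its $\P$-hardness from \Cref{thm:subs_empty}(2) by the same trivial embedding.

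The main obstacle I foresee is case~7: the quantifier-free part from \Cref{thm:subs_empty}(3) gives only $\xor\LOGSPACE$-hardness, so $\P$-hardness must genuinely exploit the universal quantifier interacting with the self-dual affine structure. I would attempt a reduction from a $\P$-complete problem such as monotone circuit value or a suitable reachability variant, encoding role edges as circuit/graph transitions and using $\forall R.C$ to propagate a parity invariant across the TBox; the difficulty is that only ternary XOR is available, so the usual $\land$-based gadgets are forbidden and the encoding must stay within the affine closure.
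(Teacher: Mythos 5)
Your decomposition mostly tracks the paper's, but there is one genuine gap and one circularity risk, plus a questionable step. The serious gap is the lower bound of case~6: the trivial embedding $\SUBS_\emptyset(B)\leqlogm\SUBS_\forall(B)$ yields $\P$-hardness only when $\CloneV_2\subseteq[B]$, but case~6 claims $\P$-completeness for \emph{all} $[B]\subseteq\CloneV$, including the operator-free clones $\CloneI_2\subseteq[B]\subseteq\CloneI$, for which \Cref{thm:subs_empty} gives only $\NL$-completeness of the quantifier-free problem. The paper instead imports the $\P$-hardness of $\SUBS_\exists(\CloneI_2)$ from \cite[Lemma~26]{ms11b} and transfers it by \Cref{lem:contrapositionSUBS} (note $\dual{\CloneI_2}=\CloneI_2$): it is the quantifier itself, not any Boolean operator, that makes these fragments $\P$-hard. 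Repairing this also dissolves your ``main obstacle'' in case~7: no new reduction from circuit value is needed, since $\CloneI_2\subseteq\CloneL_2$ and \Cref{lem:baseind} let the $\P$-hardness of $\SUBS_\forall(\CloneI_2)$ from case~6 carry over verbatim --- exactly the paper's one-line argument.

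Two further points. In case~3 you cite \Cref{thm:subs_ex} for the $\EXPTIME$-hardness of $\SUBS_\exists(\CloneL_0)$, but that statement is itself proved by reducing \emph{from} $\SUBS_\forall(\CloneL_1)$, so as written the dependency is circular; the chain must be grounded externally, as the paper does via $\overline{\TCSAT_\exists(\CloneL_0)}\leqlogm\SUBS_\exists(\CloneL_0)\leqlogm\SUBS_\forall(\CloneL_1)$, using \Cref{lem:subs-csat}(\ref{lemitem:coTCSAT_lowerbound}.) and the $\EXP$-completeness of $\TCSAT_\exists(\CloneL_0)$. In case~4 you rely on the $\EXPTIME$-hardness of $\TCSAT_\forall$ for $\CloneS_{00}$-type clones, which is not among the cited results; the paper's route is to use \Cref{lem:constants-for-s00-s01} to obtain $[\CloneS_{00}\cup\{\false\}]=\CloneM_0\supseteq\CloneE_2$ and then simply invoke case~1. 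The rest --- the $\EXPTIME$ upper bounds via \Cref{lem:baseind}, case~1 via $\mathcal{FL}_0$ subsumption, case~2 via $\overline{\TCSAT_\forall}$ with the constant obtained as you describe, case~5, and the case-6 upper bound by duality --- matches the paper.
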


\begin{proof}
	\begin{enumerate}
		\item Follows from $\EXPTIME$-hardness of $\FLzero\text{-}\SUBS$\index{FLzero@$\FLzero$} which has been shown in \cite[Thm 7.6]{hof05}.
		\item The lower bound for $\CloneN_2\subseteq[B]$ is achieved through the reductions $$\overline{\TCSAT_\forall(\CloneN_2)}\leqlogm\SUBS_\forall(\CloneN)\equivlogm\SUBS_\forall(\CloneN_2),$$ where the first reduction is due to \Cref{lem:subs-csat}(\ref{lemitem:coTCSAT_lowerbound}.) and the second equivalence holds through \Cref{lem:topbot-always-above-negSUBS} which enables us to always have access to both constants whenever $\CloneN_2\subseteq[B]$. The $\EXP$-hardness now follows from $\TCSAT_\forall(\CloneN_2)$ being $\EXP$-complete proven in \cite[Theorem 32 (1.)]{ms11b}.
		
		The $\EXP$-hardness for $\CloneL_0\subseteq[B]$ follows from \Cref{lem:subs-csat}(\ref{lemitem:coTCSAT_lowerbound}.) which states the reduction $\overline{\TCSAT_\forall(\CloneL_0)}\leqlogm\SUBS_\forall(\CloneL_0\cup\{\false\})$ where $[\CloneL_0\cup\{\false\}]=\CloneL_0$. From \cite[Theorem 32 (1.)]{ms11b} we know that $\TCSAT_\forall(\CloneL_0)$ is $\EXP$-complete.
		\item The $\EXP$-hardness follows from the following reduction:
		$$
		\overline{\TCSAT_\exists(\CloneL_0)}\overset{(a)}{\leqlogm}\SUBS_\exists(\CloneL_0)\overset{(b)}{\leqlogm}\SUBS_{\dual\exists}(\dual{\CloneL_0})=\SUBS_\forall(\CloneL_1),
		$$
		by virtue of $\TCSAT_\exists(\CloneL_0)$ being $\EXP$-complete shown in \cite[Theorem 32 (1.)]{ms11b} for $(a)$, and \Cref{lem:contrapositionSUBS} for $(b)$.
		\item Follows from \Cref{lem:constants-for-s00-s01} and the $\EXP$-hardness of $\SUBS_\forall(\CloneM_0)$ overlaid by \Cref{thm:subs_all}(\ref{num:subs_all_E2}.), as $\CloneM_0=[\CloneS_{00}\cup\{\false\}]$.
		\item The $\co\NP$-hardness follows from $\SUBS_\emptyset(\CloneD_2)$ being $\co\NP$-hard shown in  \Cref{thm:subs_empty}.
		\item For the upper bound \Cref{lem:contrapositionSUBS} lets us state the reduction
		$
		\SUBS_\forall(\CloneV)\leqlogm\SUBS_\exists(\CloneE)
		$, where the latter is in $\P$ by virtue of \Cref{thm:subs_ex}(\ref{num:subs_exists_P-complete}.).
		
		The lower bound follows again from \Cref{lem:contrapositionSUBS}, and the $\P$-hardness of $\SUBS_\exists(\CloneI_2)$ which is proven in \cite[Lemma 26]{ms11b}.
		\item The $\P$-hardness follows from(\ref{num:subs_forall_P-complete}.).
	\end{enumerate}
\end{proof}

\begin{theorem}[Existential fragments]\label{thm:subs_ex} \NoEndMark
Let $B$ be a finite set of Boolean operators.
\begin{enumerate}
	\item If $\CloneV_2\subseteq[B]$, then $\SUBS_\exists(B)$ is $\EXPTIME$-complete.\label{num:subs_ex_V2}
	\item If $\CloneN_2\subseteq [B]$ or $\CloneL_0\subseteq[B]$, then $\SUBS_\exists(B)$ is $\EXPTIME$-complete.
	\item If $\CloneL_1\subseteq[B]$, then $\SUBS_\exists(B)$ is $\EXPTIME$-complete.
	\item If $\CloneS_{10}\subseteq[B]$, then $\SUBS_\exists(B)$ is $\EXPTIME$-complete.
	\item If $\CloneD_2\subseteq[B]\subseteq\CloneD_1$, then $\SUBS_\exists(B)$ is $\co\NP$-hard and in $\EXP$.
	\item If $[B]\subseteq\CloneE$, then $\SUBS_\exists(B)$ is $\P$-complete.\label{num:subs_exists_P-complete}
	\item If $[B]=\CloneL_2$, then $\SUBS_\exists(B)$ is $\P$-hard and in $\EXP$.
\end{enumerate}
All hardness results hold \wrt $\leqlogm$ reductions.
\end{theorem}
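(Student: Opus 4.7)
The plan is to mirror Theorem \ref{thm:subs_all} almost line by line by exploiting the contraposition reduction \Cref{lem:contrapositionSUBS}, which yields $\SUBS_\forall(B) \leqlogm \SUBS_\exists(\dual B)$ for every finite base $B$. Since the universal and existential fragments are connected by this duality, and the $\EXPTIME$ upper bound is inherited everywhere from full $\mathcal{ALC}$ subsumption, most items reduce to identifying the correct dual clone on the universal side. Concretely: item 1 follows by dualizing item 1 of \Cref{thm:subs_all} via $\dual{\CloneE_2}=\CloneV_2$; item 4 uses $\dual{\CloneS_{00}}=\CloneS_{10}$; item 3 uses $\dual{\CloneL_0}=\CloneL_1$ together with the $\EXPTIME$-hardness of $\SUBS_\forall(\CloneL_0)$ from item 2 of \Cref{thm:subs_all}; item 5 is $\co\NP$-hard because $\CloneD_2$ is self-dual and $\SUBS_\emptyset(\CloneD_2)$ is already $\co\NP$-hard by \Cref{thm:subs_empty}; and item 7 is $\P$-hard by dualizing the $\P$-hardness of $\SUBS_\forall(\CloneL_2)$ from \Cref{thm:subs_all}(7), again using that $\CloneL_2$ is self-dual.

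For item 2, I prefer to argue directly rather than via duality, since the universal case was itself proved from $\TCSAT$. For $\CloneN_2 \subseteq [B]$, I will combine \Cref{lem:subs-csat}(\ref{lemitem:coTCSAT_lowerbound}.) with \Cref{lem:topbot-always-above-negSUBS} to obtain
\[
  \overline{\TCSAT_\exists(\CloneN_2)} \leqlogm \SUBS_\exists(\CloneN) \equivlogm \SUBS_\exists(\CloneN_2),
\]
and then invoke the $\EXPTIME$-completeness of $\TCSAT_\exists(\CloneN_2)$ from \cite[Theorem 32]{ms11b}. For $\CloneL_0 \subseteq [B]$, the reduction $\overline{\TCSAT_\exists(\CloneL_0)} \leqlogm \SUBS_\exists(\CloneL_0 \cup \{\false\})$ suffices since $[\CloneL_0 \cup \{\false\}] = \CloneL_0$, using again that $\TCSAT_\exists(\CloneL_0)$ is $\EXPTIME$-complete.

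The main non-dualizable part, and also the technically substantial one, is item 6: the $\P$ upper bound for $\SUBS_\exists(B)$ when $[B] \subseteq \CloneE$. The $\P$-hardness is quoted from \cite[Lemma 26]{ms11b}. For containment in $\P$, I would observe that a $\CloneE$-$\exists$-TBox is essentially an $\mathcal{EL}$-TBox (conjunctions, existential restrictions, and possibly $\true,\false$ via \Cref{lem:constants-for-s00-s01}), and implement a standard completion-style algorithm: start from the subsumption relation induced by atomic axioms, then repeatedly apply the closure rules for conjunction and existential restriction on pairs of concepts appearing in the input, until a fixpoint is reached. Polynomiality follows because the number of relevant concept pairs is quadratic in the input size and each rule firing is polynomial-time checkable. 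This is the only step that requires real combinatorial work; every other item of the theorem is a routine consequence of duality, \Cref{lem:subs-csat}, and results already established in \Cref{thm:subs_empty} and \Cref{thm:subs_all}.
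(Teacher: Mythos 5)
Your proposal is correct and matches the paper's proof in all essentials: the hardness parts of items 1--5 and 7 are obtained by dualizing \Cref{thm:subs_all} via \Cref{lem:contrapositionSUBS} or carried over from \Cref{thm:subs_empty}, and the item~6 upper bound is the standard $\mathcal{EL}$ completion argument, which the paper simply cites as \cite[Thm.~9]{bra04c}. The only divergences are harmless: for item~4 the paper routes through $\CloneM_1=[\CloneS_{10}\cup\{\true\}]$ and \Cref{lem:constants-for-s00-s01} instead of $\dual{\CloneS_{00}}=\CloneS_{10}$, and for the $\CloneN_2$ case of item~2 the paper uses the self-duality reduction $\SUBS_\forall(\CloneN_2)\leqlogm\SUBS_\exists(\CloneN_2)$, which avoids your reliance on the $\EXPTIME$-hardness of $\TCSAT_\exists(\CloneN_2)$ (a fact not cited elsewhere in the paper, though it does hold).
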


\begin{proof}
\begin{enumerate}
 \item[1.-3.] For the following reductions showing the needed lower bounds for $\CloneV_2,\CloneN_2,\CloneL_1,$ and $\CloneL_0$ we use \Cref{thm:subs_all} in combination with the contraposition argument in \Cref{lem:contrapositionSUBS}:
		\begin{align*}
			\SUBS_\forall(\CloneE_2)&\leqlogm\SUBS_\exists(\CloneV_2), &
			\SUBS_\forall(\CloneN_2)&\leqlogm\SUBS_\exists(\CloneN_2),\\
			\SUBS_\forall(\CloneL_0)&\leqlogm\SUBS_\exists(\CloneL_1),\text{ and}&
			\SUBS_\forall(\CloneL_1)&\leqlogm\SUBS_\exists(\CloneL_0).
		\end{align*}

\item[4.] The needed lower bound follows from \Cref{lem:constants-for-s00-s01} whereas the $\EXP$-hardness of $\SUBS_\exists(\CloneM_1)$ overlaid by \Cref{thm:subs_ex}(\ref{num:subs_ex_V2}.) as $\CloneM_1=[\CloneS_{10}\cup\{\true\}]$.

\item[5.] The $\co\NP$ lower bound follows from $\SUBS_\emptyset(B)$ shown in \Cref{thm:subs_empty}.

\item[6.] The upper bound follows from the membership of subsumption for the logic $\mathcal{ELH}$\index{ELH@$\mathcal{ELH}$} in \P, \cite[Thm. 9]{bra04c}. The lower bound is proven in \cite[Lemma 26]{ms11b}.

\item[7.] The lower bound follows from \Cref{thm:subs_ex}(\ref{num:subs_exists_P-complete}.).
\end{enumerate}
\end{proof}

Finally the classification of the full quantifier fragments naturally emerges from the previous cases to $\EXP$-complete, $\co\NP$-, and $\P$-hard cases.

\begin{theorem}[Both quantifiers available] \NoEndMark
Let $B$ be a finite set of Boolean operators.
\begin{enumerate}
	\item Let $X\in\{\CloneN_2,\CloneV_2,\CloneE_2\}$. If $X\subseteq[B]$, then $\SUBS_\exall(B)$ is \EXPTIME-complete.
	\item If $\CloneI_0\subseteq[B]$ or $\CloneI_1\subseteq[B]$, then $\SUBS_\exall(B)$ is \EXPTIME-complete.
	\item If $\CloneD_2\subseteq[B]\subseteq\CloneD_1$, then $\SUBS_\exall(B)$ is \co\NP-hard and in \EXP.
	\item If $[B]\in\{\CloneI_2,\CloneL_2\}$, then $\SUBS_\exall(B)$ is \P-hard and in \EXP.
\end{enumerate}
All hardness results hold \wrt $\leqlogm$ reductions.
\end{theorem}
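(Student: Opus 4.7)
The plan is to combine three ingredients: the trivial \emph{quantifier monotonicity} $\SUBS_\calQ(B)\leqlogm\SUBS_{\calQ'}(B)$ whenever $\calQ\subseteq\calQ'$ (the identity reduction works since any $B$-$\calQ$-input is syntactically also a $B$-$\calQ'$-input), the lower bounds already proved in \Cref{thm:subs_empty,thm:subs_all,thm:subs_ex}, and the uniform upper bound $\SUBS_\exall(B)\in\EXPTIME$ stemming from full $\ALC$-subsumption being $\EXPTIME$-complete.

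With these tools, parts 1, 3, and 4 are immediate. In part 1, monotonicity lifts \Cref{thm:subs_all}(\ref{num:subs_all_E2}.), \Cref{thm:subs_all}(\ref{subs_all_L0_EXP}.), and \Cref{thm:subs_ex}(\ref{num:subs_ex_V2}.) to the $\CloneE_2$, $\CloneN_2$, and $\CloneV_2$ cases, respectively. In part 3 the $\co\NP$ lower bound is inherited from $\SUBS_\emptyset(B)$ via \Cref{thm:subs_empty}(1.), noting that $\CloneD_2\subseteq[B]$ already suffices for that item. In part 4, $\P$-hardness for $[B]=\CloneI_2$ comes from \Cref{thm:subs_ex}(\ref{num:subs_exists_P-complete}.), and for $[B]=\CloneL_2$ from item~7 of the same theorem; both lift to $\SUBS_\exall$ by monotonicity.

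Part 2 is the only case demanding an extra step. For $\CloneI_0\subseteq[B]$ we have $\false\in[B]$, so \Cref{lem:subs-csat}(\ref{lemitem:coTCSAT_lowerbound}.) combined with \Cref{lem:baseind} yields $\overline{\TCSAT_\exall(B)}\leqlogm\SUBS_\exall(B\cup\{\false\})\equivlogm\SUBS_\exall(B)$, and $\EXPTIME$-hardness is then inherited from the corresponding entry of the $\TCSAT$-classification in~\cite{ms11b}. The case $\CloneI_1\subseteq[B]$ is reduced to the previous one by duality: since $\dual\CloneI_0=\CloneI_1$ and $\dual\exall=\exall$, \Cref{lem:contrapositionSUBS} together with \Cref{lem:baseind} gives $\SUBS_\exall(\CloneI_0)\leqlogm\SUBS_\exall(\CloneI_1)\leqlogm\SUBS_\exall(B)$.

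The only step resting on external input is the $\EXPTIME$-hardness of $\TCSAT_\exall(\CloneI_0)$, and this is the main obstacle: should that minimal clone not be isolated in~\cite{ms11b}, a direct proof would be needed. The natural route would be to encode the computation of an alternating polynomial-space Turing machine using only axioms of the shape $A\dsub\exists R.B$, $A\dsub\forall R.B$, and $A\dsub\false$, all expressible with $\mathrm{id}$ and $\false$ alone, mimicking the standard $\EXPTIME$-hardness construction for $\ALC$ TBox satisfiability, which places no genuine Boolean connective inside concepts.
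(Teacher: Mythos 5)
Your proposal is correct and follows essentially the same decomposition as the paper: parts 1 and 3 lift the lower bounds from the single-quantifier, respectively quantifier-free, fragments via the (implicit in the paper, explicit in your write-up) identity reduction $\SUBS_\calQ(B)\leqlogm\SUBS_{\calQ'}(B)$ for $\calQ\subseteq\calQ'$, and part 2 handles $\CloneI_0$ through \Cref{lem:subs-csat}(\ref{lemitem:coTCSAT_lowerbound}.) against the $\EXPTIME$-completeness of $\TCSAT_\exall(\CloneI_0)$ and $\CloneI_1$ by the contraposition argument of \Cref{lem:contrapositionSUBS}. Your worry that the $\EXPTIME$-hardness of $\TCSAT_\exall(\CloneI_0)$ might not be isolated in the literature is unfounded: the paper cites it directly as \cite[Theorem 2 (1.)]{ms11}, so your fallback ATM encoding is not needed. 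The one place you genuinely diverge is part 4, and there your route is the better one: the paper states that the lower bounds ``carry over from $\SUBS_\emptyset(B)$,'' but \Cref{thm:subs_empty} only yields $\NL$-completeness for $\CloneI_2$ and $\xor\LOGSPACE$-hardness for $\CloneL_2$, neither of which is known to imply $\P$-hardness. Sourcing the $\P$-hardness instead from \Cref{thm:subs_ex}(\ref{num:subs_exists_P-complete}.) and its item 7 --- ultimately the $\P$-hardness of $\SUBS_\exists(\CloneI_2)$ from \cite[Lemma 26]{ms11b}, lifted by quantifier monotonicity --- is the argument that actually delivers the stated bound.
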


\begin{proof}
\begin{enumerate}
\item Follows from the respective lower bounds of $\SUBS_\exists(B)$, \resp, $\SUBS_\forall(B)$ shown in \Cref{thm:subs_all,thm:subs_ex}.
		
\item The needed lower bound follows from \Cref{lem:subs-csat}(\ref{lemitem:coTCSAT_lowerbound}.) and enables a reduction from the $\EXPTIME$-complete problem $\overline{\TCSAT_\exall(\CloneI_0)}$ \cite[Theorem 2 (1.)]{ms11}. The case $\SUBS_\exall(B)$ with $\CloneI_1\subseteq[B]$ follows from the contraposition argument in \Cref{lem:contrapositionSUBS}.
		
\item[3.+4.] The lower bounds carry over from $\SUBS_\emptyset(B)$ for the respective sets $B$ (see \Cref{thm:subs_empty}).
\end{enumerate}
\end{proof}

\section{Conclusion and Discussion}
\Cref{fig:SUBSexall_ex_all} visualizes how the results arrange in Post's lattice. The classification has shown that the subsumption problem with both quantifiers is a very difficult problem. Even a restriction down to only one of the constants leads to an intractable fragment with $\EXPTIME$-completeness. Although we achieved a $\P$ lower bound for the case without any constants, i.e., the clone $\CloneI_2$ it is not clear how to state a polynomial time algorithm for this case: We believe that the size of satisfying  interpretations always can be polynomially in the size of the given TBox but a deterministic way to construct it is not obvious to us. If one starts with an individual instantiating the given concept $C$ then it is not easy to decide how to translate a triggered axiom into the interpretation (e.g., should a role edge be a loop or not). Further it is much harder to construct such an algorithm for the case $\CloneL_2$ having a ternary exclusive-or operator.

Retrospectively the subsumption problem is much harder than the the usual terminology satisfiability problems visited in \cite{ms11}. Due to the duality principle expressed by \Cref{lem:contrapositionSUBS} both halves of Post's lattice contain intractable fragments plus it is not clear if there is a tractable fragment at all. For the fragments having access to only one of the quantifiers the clones which are able to express either disjunction (for the universal quantifier) or conjunction (for the existential case) become tractable (plus both constants). Without any quantifier allowed the problem almost behaves as the propositional implication problem with respect to tractability. The only exception of this rule are the $\CloneL$-cases that can express negation or at least one constant. They become $\co\NP$-complete and therewith intractable.

Finally a similar systematic study of the subsumption problem for concepts (without respect to a terminology) would be of great interest because of the close relation to the implication problem of modal formulae. To the best of the author's knowledge such a study has not been done yet and would enrich the overall picture of the complexity situation in this area of research. Furthermore it would be interesting to study the effects of several restrictions on terminologies to our classification, e.g., acyclic or cyclic TBoxes.
\begin{figure}
\centering
\includegraphics[width=.9\linewidth]{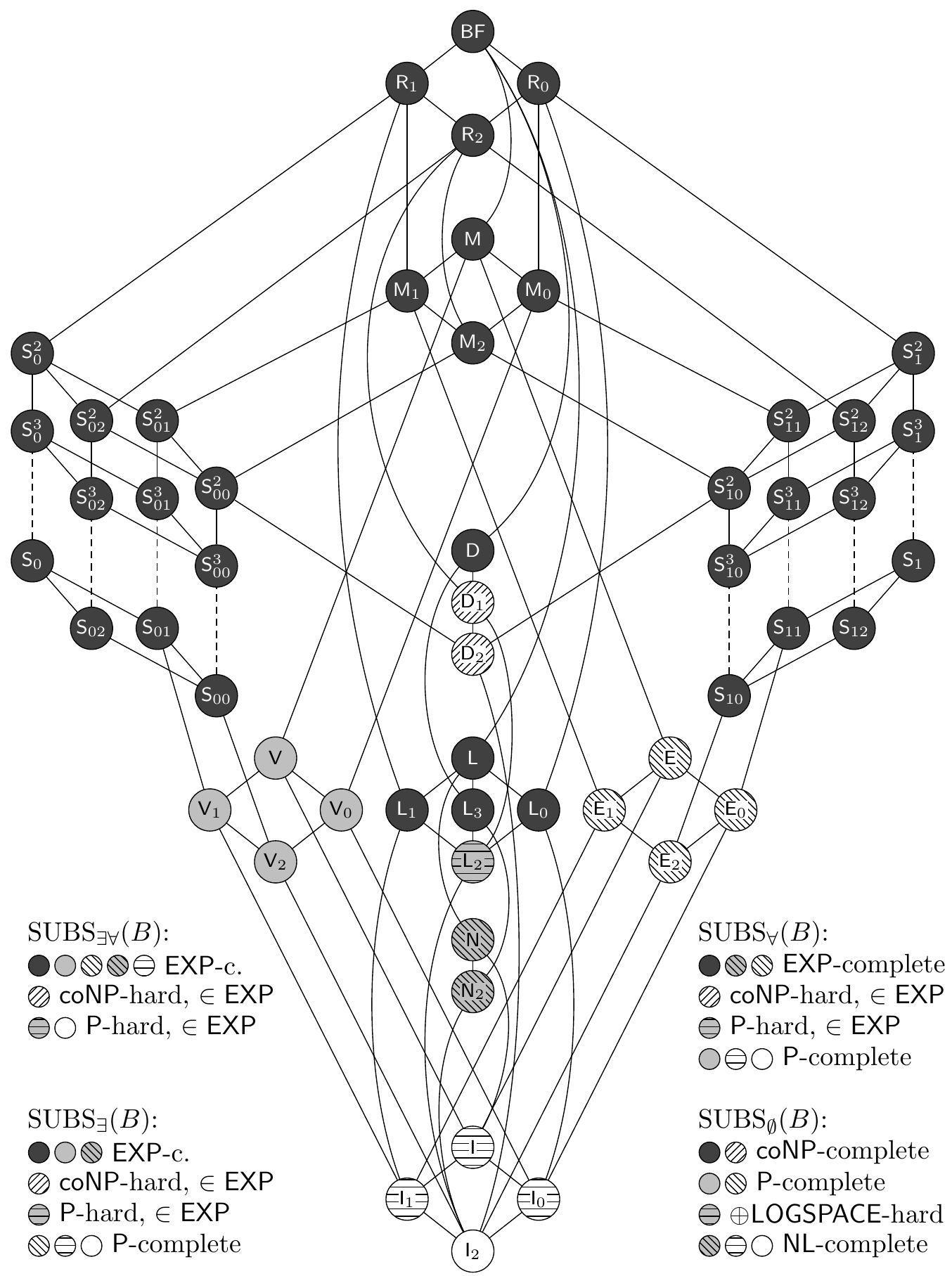}
	\caption{Post's lattice showing the complexity of $\SUBS_\calQ(B)$ for all sets $\emptyset\subseteq\calQ\subseteq\{\exists,\forall\}$ and all Boolean clones $[B]$.}
	\label{fig:SUBSexall_ex_all}
\end{figure}

\paragraph{Acknowledgements}
The author thanks Thomas Schneider (Bremen) and Peter Lohmann (Hannover) for several helpful discussions about the paper.

\bibliographystyle{plain}
\bibliography{subsumption}

\begin{thebibliography}{10}

\bibitem{ackz07}
A.~Artale, D.~Calvanese, R.~Kontchakov, and M.~Zakharyaschev.
\newblock {DL-Lite} in the light of first-order logic.
\newblock In {\em Proc.\ AAAI}, pages 361--366, 2007.

\bibitem{ackz09}
A.~Artale, D.~Calvanese, R.~Kontchakov, and M.~Zakharyaschev.
\newblock Adding weight to {DL-Lite}.
\newblock In {\em Proc.\ DL}, CEUR-WS, 2009.

\bibitem{bbl05}
F.~Baader, S.~Brandt, and C.~Lutz.
\newblock Pushing the {\EL} envelope.
\newblock In {\em Proc.\ IJCAI}, pages 364--369, 2005.

\bibitem{bbl08}
F.~Baader, S.~Brandt, and C.~Lutz.
\newblock Pushing the {\EL} envelope further.
\newblock In {\em Proc.\ OWLED DC}, 2008.

\bibitem{DLHB}
F.~Baader, D.~Calvanese, D.~L. McGuinness, D.~Nardi, and P.~F. Patel-Schneider,
  editors.
\newblock {\em The Description Logic Handbook: Theory, Implementation, and
  Applications}, volume~1.
\newblock Cambridge University Press, 2nd edition, 2003.

\bibitem{bemethvo09}
O.~Beyersdorff, A.~Meier, M.~Thomas, and H.~Vollmer.
\newblock The complexity of reasoning for fragments of default logic.
\newblock In {\em Theory and Applications of Satisfiability Testing - SAT
  2009}, volume 5584 of {\em Lecture Notes in Computer Science}, pages 51--64.
  Springer Berlin / Heidelberg, 2009.

\bibitem{bmtv09}
O.~Beyersdorff, A.~Meier, M.~Thomas, and H.~Vollmer.
\newblock {The Complexity of Propositional Implication}.
\newblock {\em Information Processing Letters}, 109(18):1071--1077, 2009.

\bibitem{Bohler:2003tg}
E~B{\"o}hler, N~Creignou, S~Reith, and H~Vollmer.
\newblock {Playing with Boolean blocks, part I: Post's lattice with
  applications to complexity theory}.
\newblock {\em SIGACT News}, 34(4):38--52, 2003.

\bibitem{brle84}
R.~J. Brachman and H.~J. Levesque.
\newblock The tractability of subsumption in frame-based description languages.
\newblock In {\em AAAI}, pages 34--37, 1984.

\bibitem{brsc85}
R.~J. Brachman and J.~G. Schmolze.
\newblock An overview of the {\sc kl-one} knowledge representation system.
\newblock {\em Cognitive Science}, 9(2):171--216, 1985.

\bibitem{bra04c}
S.~Brandt.
\newblock Subsumption and instance problem in $\mathcal{ELH}$ w.r.t. general
  tboxes.
\newblock LTCS-Report LTCS-04-04, Chair for Automata Theory, Institute for
  Theoretical Computer Science, Dresden University of Technology, Germany,
  2004.
\newblock See http://lat.inf.tu-dresden.de/research/reports.html.

\bibitem{cgllr05}
D.~Calvanese, G.~{De Giacomo}, D.~Lembo, M.~Lenzerini, and R.~Rosati.
\newblock {DL-Lite}: Tractable description logics for ontologies.
\newblock In {\em Proc.\ AAAI}, pages 602--607, 2005.

\bibitem{19007439}
Ronald Cornet and Nicolette de~Keizer.
\newblock Forty years of snomed: a literature review.
\newblock {\em BMC Medical Informatics and Decision Making}, 8(Suppl 1):S2,
  2008.

\bibitem{cmtv10}
N.~Creignou, A.~Meier, M.~Thomas, and H.~Vollmer.
\newblock The complexity of reasoning for fragments of autoepistemic logic.
\newblock In Benjamin Rossman, Thomas Schwentick, Denis Th{\'e}rien, and
  Heribert Vollmer, editors, {\em Circuits, Logic, and Games}, number 10061 in
  Dagstuhl Seminar Proceedings, Dagstuhl, Germany, 2010. Schloss Dagstuhl -
  Leibniz-Zentrum fuer Informatik, Germany.

\bibitem{crscthwo10}
N.~Creignou, J.~Schmidt, M.~Thomas, and S.~Woltran.
\newblock Sets of boolean connectives that make argumentation easier.
\newblock In {\em Proc. 12th European Conference on Logics in Artificial
  Intelligence}, volume 6341 of {\em Lecture Notes in Computer Science}, pages
  117--129. Springer, 2010.

\bibitem{dlnhnm92}
F.~M. Donini, M.~Lenzerini, D.~Nardi, B.~Hollunder, W.~Nutt, and
  A.~Marchetti-Spaccamela.
\newblock The complexity of existential quantification in concept languages.
\newblock {\em AI}, 53(2-3):309--327, 1992.

\bibitem{dlnn97}
F.~M. Donini, M.~Lenzerini, D.~Nardi, and W.~Nutt.
\newblock The complexity of concept languages.
\newblock {\em Inf. Comput.}, 134(1):1--58, 1997.

\bibitem{hescsc08}
E.~Hemaspaandra, H.~Schnoor, and I.~Schnoor.
\newblock Generalized modal satisfiability.
\newblock {\em CoRR}, abs/0804.2729:1--32, 2008.

\bibitem{hof05}
M.~Hofmann.
\newblock Proof-theoretic approach to description-logic.
\newblock In {\em Proc.\ LICS}, pages 229--237, 2005.

\bibitem{MS10}
A.~Meier and T.~Schneider.
\newblock The complexity of satisfiability for sub-{B}oolean fragments of
  {ALC}.
\newblock In {\em Proc.\ of DL-2010}. CEUR-WS.org, 2010.

\bibitem{ms11}
A.~Meier and T.~Schneider.
\newblock Generalized satisfiability for the description logic $\mathcal{ALC}$.
\newblock In {\em Proceedings of the 8th Annual Conference on Theory and
  Application of Models of Computation}, volume 6648 of {\em Lecture Notes in
  Computer Science}, pages 552--562. Springer Verlag, 2011.

\bibitem{ms11b}
A.~Meier and T.~Schneider.
\newblock Generalized satisfiability for the description logic $\mathcal{ALC}$.
\newblock {\em CoRR}, abs/1103.0853:1--37, March 2011.

\bibitem{owl2}
B.~Motik, P.~F. Patel-Schneider, and B.~Parsia.
\newblock Owl 2 web ontology language: Structural specification and
  functional-style syntax, 2009.
\newblock \url{http://www.w3.org/TR/2009/REC-owl2-syntax-20091027/}.

\bibitem{nabra03}
D.~Nardi and R.~J. Brachman.
\newblock {\em An Introduction to Description Logics}, chapter~1.
\newblock Volume~1 of Baader et~al. \cite{DLHB}, 2nd edition, 2003.

\bibitem{pip97b}
N.~Pippenger.
\newblock {\em Theories of Computability}.
\newblock Cambridge University Press, 1997.

\bibitem{pos41}
E.~Post.
\newblock The two-valued iterative systems of mathematical logic.
\newblock {\em Annals of Mathematical Studies}, 5:1--122, 1941.

\bibitem{19007443}
Patrick Ruch, Julien Gobeill, Christian Lovis, and Antoine Geissbuhler.
\newblock Automatic medical encoding with snomed categories.
\newblock {\em BMC Medical Informatics and Decision Making}, 8(Suppl 1):S6,
  2008.

\bibitem{sc07}
H.~Schnoor.
\newblock {\em Algebraic Techniques for Satisfiability Problems}.
\newblock PhD thesis, Gottfried Wilhelm Leibniz Universit{\"a}t Hannover, 2007.

\bibitem{sc08}
I.~Schnoor.
\newblock {\em The Weak Base Method for Constraint Satisfaction}.
\newblock PhD thesis, Gottfried Wilhelm Leibniz Universit{\"a}t Hannover, 2008.

\bibitem{sriy90}
R.~Sridhar and S.~Iyengar.
\newblock Efficient parallel algorithms for functional dependency
  manipulations.
\newblock In {\em Proc.\ ICPADS}, pages 126--137. ACM, 1990.

\bibitem{vol99}
H.~Vollmer.
\newblock {\em Introduction to Circuit Complexity}.
\newblock Springer, 1999.

\end{thebibliography}

\end{document}